\begin{document}

\newcommand{\red}[1]{\textcolor{red}{#1}} \newcommand{\?}{{\bf ???}}

\definecolor{axcolor}{rgb}{.3,0,.3}
\newcommand{\Triv}{\mathit{Triv}}
\newcommand{\timed}{\mathsf{time}}
\newcommand{\sqspace}{\mathsf{space}^2}
\newcommand{\sqlength}{\mathsf{length}^2}
\newcommand{\sqspeed}{\mathsf{speed}^2}
\newcommand{\sqvelocity}{\mathsf{velocity}^2}
\newcommand{\ls}{\mathsf{c}}
\newcommand{\vo}{\bar o}
\newcommand{\vr}{\bar r}
\newcommand{\vx}{\bar x}
\newcommand{\vy}{\bar y}
\newcommand{\vz}{\bar z}
\newcommand{\vv}{\bar v}
\newcommand{\vu}{\bar u}
\newcommand{\de}{:=}
\newcommand{\defiff}{\ \stackrel{\;def}{\Longleftrightarrow}\ }
\newcommand{\IOb}{\ensuremath{\mathsf{IOb}}} 
\newcommand{\IB}{\ensuremath{\mathsf{IB}}} 
\newcommand{\Ob}{\ensuremath{\mathsf{Ob}}} 
\newcommand{\B}{\ensuremath{\mathit{B}}} 
\newcommand{\Ph}{\ensuremath{\mathsf{Ph}}} 
\newcommand{\Q}{\ensuremath{\mathit{Q}}} 
\newcommand{\W}{\ensuremath{\mathsf{W}}} 
\newcommand{\ev}{\ensuremath{\mathsf{ev}}} 
\newcommand{\dom}{\ensuremath{\mathit{Dom}\,}} 
\newcommand{\ran}{\ensuremath{\mathit{Ran}\,}} 
\newcommand{\ax}[1]{\textcolor{axcolor}{\ensuremath{\mathsf{#1}}}} 
\newcommand{\wl}{\ensuremath{\mathsf{wl}}}
\newcommand{\lc}{\ensuremath{\mathsf{lc}}}
\newcommand{\w}{\ensuremath{\mathsf{w}}}

\titlerunning{Hypercomputation in SR} \title{Existence of Faster Than
  Light Signals Implies Hypercomputation Already in Special
  Relativity\thanks{This research is supported by the Hungarian
    Scientific Research Fund for basic research grants No.~T81188 and
    No.~PD84093.}}  \author{P\'eter N\'emeti \and Gergely Sz\'ekely}
\institute{ Alfr\'ed R\'enyi Institute of Mathematics\\ Mailing
  address: POB 127, H-1364 Budapest, Hungary
  \\ \email{nemeti.peter@renyi.mta.hu, szekely.gergely@renyi.mta.hu}}
\maketitle

\begin{abstract} 
Within an axiomatic framework, we investigate the possibility of
hypercomputation in special relativity via faster than light
signals. We formally show that hypercomputation is theoretically
possible in special relativity if and only if there are faster than
light signals.
\smallskip

{\bf Keywords:} relativistic computation, special relativity, faster than light signals
\end{abstract}

\section{Introduction}
The theory of relativistic hypercomputation (i.e., the investigation
of relativity theory based physical computational scenarios which are
able to solve non-Turing-computable problems) has an extensive
literature and it is investigated by several researchers in the past
decades, see, e.g., \cite{ANN}, \cite{nemeti-dgy},
\cite{Earman-Norton}, \cite{Etesi-Nemeti}, \cite{Hogarth92},
\cite{Manchak}. For an overview of different approaches to
hypercomputation, see, e.g., \cite{Stannett}.

It is well-known that hypercomputation is not possible in special
relativity in the usual sense (i.e., the sense of Malament--Hogarth
spacetimes), see, e.g., \cite{Hogarth92}. In this paper, we show that
it is possible to perform relativistic hypercomputation via
  ordinary computers (Turing machines) in special relativity if there
are faster than light (FTL) signals, e.g., particles. We will also
show that there have to be FTL signals if relativistic
hypercomputation is possible in special relativity (via Turing
machines), see Thm.\ref{thm-hc}.

It is interesting in and of itself to investigate the (logical)
consequences of the assumption that FTL objects exist, independently
of the question whether they really exist or not in our actual
physical universe. Logic based axiomatic investigations typically aim
for describing all the theoretically possible universes and not just
our actual one. Moreover, so far we have not excluded the
  possibility of the existence of FTL entities in our actual universe;
  and from time to time there appear theories and experimental results
  suggesting the existence of FTL objects. Recently, the OPERA
  experiment, see \cite{opera}, raised the interest in the possibility
  of FTL particles.

Contrary to the common belief, the existence of FTL particles does
not lead to a logical contradiction within special relativity. For a
formal axiomatic proof of this fact, see \cite{FTLconsSR}. However, it
is interesting to note that, in contrast with this result, the
impossibility of the existence of FTL inertial {\it observers} follows from
special relativity, see, e.g., \cite{Synthese}.

The investigation of FTL motion in relativity theory goes back (at
least) to Tolman, see, e.g., \cite[p.54-55]{tolman}.  Since then a
great many works dealing with FTL motion have appeared in the
literature, see, e.g., \cite{matolcsi-ftl}, \cite{mittelstaedt-ftl},
\cite{recami-ftl}, \cite{recami-ftl2}, \cite{RFG-ftl},
\cite{selleri-ftl}, \cite{weinstein-ftl} to mention only a few.

\section{Hypercomputation in SR}
\label{sec-hc}

It is well-known that we can send information back to the past if
there are FTL particles, see, e.g., \cite{FTLconsSR},
\cite[p.54-55]{tolman}.  It is natural to try using this possibility
to design computers with greater computational power.  We will show
that uniformly accelerated relativistic computers can compute beyond
the Church--Turing barrier via using FTL signals. In this section, we
show this fact informally.  In Sect.\ref{sec-HC}, we reconstruct our
informal ideas of this section within an axiomatic theory of special
relativity extended with accelerated observers.

Our first observation is that if we can send out an FTL signal with a
certain speed, we also have to be able to send out arbitrarily fast
signals, by the principle of relativity.  Prop.\ref{prop-nolim} is a
formal statement of this observation.  To informally justify this
statement, let us assume that we can send out an FTL signal by a
certain experiment, say with speed $1.01\mathrm{c}$. According to
special relativity, for any FTL speed, say $10^{10}\mathrm{c}$, there
is a inertial reference frame (moving relative to our frame) according
to which our signal moves with this speed. By the principle of
relativity, inertial frames are experimentally indistinguishable, see
\cite[\S 5, pp.149-159]{friedman}, \cite{galileo},
\cite[pp.176-178]{taylor-wheeler}. So the experiment which is
configured in our reference frame as our original experiment is seen
by this moving inertial frame as yielding an FTL signal moving with speed
$10^{10}\mathrm{c}$ in our frame.  Therefore, in our (or any other
inertial) reference frame, it is possible to send out an FTL signal
with any speed.

Let us see the construction of our special relativistic
hypercomputer. Let the computer be accelerated uniformly with respect
to an inertial observer, see Fig.\ref{fig-hc}.\footnote{In relativity
  theory, uniform acceleration means motion along a hyperbola
  (according to inertial observers), see, e.g., \cite[\S 3.8,
    pp.37-38]{dinverno}, \cite[\S 6]{MTW}, \cite[\S 12.4,
    pp.267-272]{Rindler}.}  There is an event $O$ with the following
property: any event $E$ on the worldline of our uniformly accelerated
computer is simultaneous with $O$, according to the inertial observer
comoving with the computer at $E$, see, e.g.,
\cite[Fig.6.4, p.173]{MTW}, \cite[Fig.5.13, p.152]{petkov}.

\begin{figure}
\begin{center}
\psfrag{S}[br][br]{\textcolor{magenta}{FTL signal}} 
\psfrag{k}[tl][tl]{\textcolor{blue}{Computer}}
\psfrag{m}[tl][tl]{\textcolor{green}{Programmer}}
\psfrag{E}[tr][tr]{$E$} 
\psfrag{O}[tr][tr]{$O$}
\psfrag{M}[tr][tr]{$M$}
\psfrag{Cm}[tl][tl]{Comoving observer at 2}
\psfrag{Sim}[tl][tl]{\parbox{120pt}{Simultaneity of the comoving observer at 2}}
\psfrag{Sim1}[tl][tl]{\parbox{120pt}{Simultaneity of the comoving observer at 1}}
\psfrag{0}[tl][tl]{0} 
\psfrag{1}[tl][tl]{1}  
\psfrag{2}[tl][tl]{2} 
\psfrag{3}[tl][tl]{3}
\includegraphics[keepaspectratio,width=0.7\textwidth]{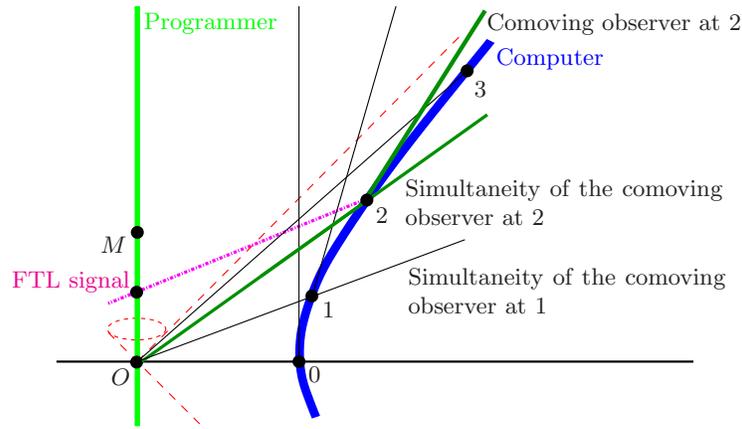}
\caption{Illustration of hypercomputation via FTL particles}
\label{fig-hc}
\end{center}
\end{figure}

Now let us show that this configuration can be used to decide
non-Turing-computable questions if there are FTL signals.  Let us set
the computer to work on some recursively enumerable but non
Turing-computable problem, say the decision problem for the
consistency of ZF set theory; the computer enumerates one by one all
the consequences of ZF. Let us fix an event $M$ on the worldline of
the programmer which is later than $O$ according to him. Now, if the
computer finds a contradiction, let it send out a fast enough signal
which reaches the programmer before event $M$. Such signal exists
since, by our first observation, the computer can send out a signal
which is arbitrarily fast with respect to his coordinate system (i.e.,
any half line in the ``upper'' half space determined by the comoving
observer's simultaneity can be the worldline of the signal).
Therefore, if the programmer receives a signal between events $O$ and
$M$, he knows that ZF is inconsistent; and if there is no signal
between $M$ and $O$, he knows that the computer has not found any
contradiction, so after event $M$ the programmer can conclude that there
is no contradiction in ZF set theory.  The same way, by this thought
experiment using FTL signals, we can decide (experimentally) any
recursively enumerable set of numbers.

If there are no FTL signals, then the whole computation has to happen
in the causal past of the event when the programmer learns the result
of computation. However, in special relativity, the computer remaining
within the causal past of any event has only finite time to compute by
the twin paradox theorem.  That is why hypercomputation is not
possible in special relativity without FTL signals. This argument is
also the basis of proving that Minkowski spacetime is not a
Malament--Hogarth spacetime.

\section{The Language of Our Axiom Systems}
\label{lang-s}
To formalize the result of Sect.\ref{sec-hc}, we need an axiomatic
theory of special relativity extended with accelerated observers. To
introduce any axiomatic theory, first we have to fix the set of basic
symbols of the theory, i.e., what objects and relations between them
we will use as basic concepts.

Here we will use the following two-sorted\footnote{That our theory is
  two-sorted means only that there are two types of basic objects
  (bodies and quantities) as opposed to, e.g., Zermelo--Fraenkel set
  theory where there is only one type of basic objects (sets).}
language of first-order logic pa\-ram\-e\-trized by a natural number $d\ge
2$ representing the dimension of spacetime:
\begin{equation*}
\{\, \B,\Q\,; \Ob, \IOb, \Ph,+,\cdot,\le,\W\,\},
\end{equation*}
where $\B$ (bodies) and $\Q$ (quantities) are the two sorts,
$\Ob$ (observers), $\IOb$ (inertial observers) and $\Ph$ (light
signals) are one-place relation symbols of
sort $\B$, $+$ and $\cdot$ are two-place function symbols of sort
$\Q$, $\le$ is a two-place relation symbol of sort $\Q$, and $\W$ (the
worldview relation) is a $d+2$-place relation symbol the first two
arguments of which are of sort $\B$ and the rest are of sort $\Q$.

Relations $\Ob(o)$, $\IOb(m)$ and $\Ph(p)$ are translated as
``\textit{$o$ is an observer},'' ``\textit{$m$ is an inertial
  observer},'' and ``\textit{$p$ is a light signal},''
respectively. To speak about coordinatization, we translate
$\W(k,b,x_1,x_2,\ldots,x_d)$ as ``\textit{body $k$ coordinatizes body
  $b$ at space-time location $\langle x_1, x_2,\ldots,x_d\rangle$},''
(i.e., at space location $\langle x_2,\ldots,x_d\rangle$ and instant
$x_1$).

{\bf Quantity terms} are the variables of sort $\Q$ and what can be
built from them by using operations $+$ and $\cdot$, {\bf body terms} are
only the variables of sort $\B$.  Relations $\Ob(o)$, $\IOb(m)$,
$\Ph(p)$, $\W(m,b,x_1,\ldots,x_d)$, $x=y$, and $x\le y$ where $o$,
$m$, $p$, $b$, $x$, $y$, $x_1$, \ldots, $x_d$ are arbitrary terms of
the respective sorts are so-called {\bf atomic formulas} of our
first-order logic language.  {\bf Formulas} are built up from these
atomic formulas by using the logical connectives \textit{not}
($\lnot$), \textit{and} ($\land$), \textit{or} ($\lor$),
\textit{implies} ($\rightarrow$), \textit{if-and-only-if}
($\leftrightarrow$) and the quantifiers \textit{exists} ($\exists$)
and \textit{for all} ($\forall$).

To make them easier to read, we omit the outermost universal
quantifiers from the formalizations of our axioms, i.e., all the free
variables are universally quantified.

We use the notation $\Q^d$ for the set of all $d$-tuples of elements
of $\Q$. If $\vx\in \Q^d$, we assume that $\vx=\langle
x_1,\ldots,x_d\rangle$, i.e., $x_i$ denotes the
$i$-th component of the $d$-tuple $\vx$. Specially, we write $\W(m,b,\vx)$ in
place of $\W(m,b,x_1,\dots,x_d)$, and we write $\forall \vx$ in place
of $\forall x_1\dots\forall x_d$, etc.

We use first-order set theory as a meta theory to speak about model
theoretical terms, such as models.  The {\bf models} of this language
are of the form
\begin{equation*}
{\mathfrak{M}} = \langle \B, \Q;\Ob_\mathfrak{M},
\IOb_\mathfrak{M},\Ph_\mathfrak{M},+_\mathfrak{M},\cdot_\mathfrak{M},\le_\mathfrak{M},\W_\mathfrak{M}\rangle,
\end{equation*}
where $\B$ and $\Q$ are nonempty sets, $\Ob_\mathfrak{M}$,
$\IOb_\mathfrak{M}$ and $\Ph_\mathfrak{M}$ are unary relations on
$\B$, $+_\mathfrak{M}$ and $\cdot_\mathfrak{M}$ are binary functions
and $\le_\mathfrak{M}$ is a binary relation on $\Q$, and
$\W_\mathfrak{M}$ is a relation on $\B\times \B\times
\Q^d$.  Formulas are interpreted in $\mathfrak{M}$
in the usual way. For the precise definition of the syntax and
semantics of first-order logic, see, e.g., \cite[\S 1.3]{CK}, \cite[\S 2.1, \S
  2.2]{End}. 

\section{Axioms of Special Relativity}
\label{ax-s}

Let us recall some of our axioms for special relativity.
Our first axiom states some basic properties of addition,
multiplication and ordering true for real numbers.
\begin{description}
\item[\underline{\ax{AxOField}:}]
 The quantity part $\langle \Q,+,\cdot,\le \rangle$ is an ordered field, i.e.,
\begin{itemize}
\item  $\langle\Q,+,\cdot\rangle$ is a field in the sense of abstract
algebra; and
\item 
the relation $\le$ is a linear ordering on $\Q$ such that  
\begin{itemize}
\item[i)] $x \le y\rightarrow x + z \le y + z$ and 
\item[ii)] $0 \le x \land 0 \le y\rightarrow 0 \le xy$
holds.
\end{itemize}
\end{itemize}
\end{description}

In the next axiom, we will use the concepts of time difference and
spatial distance.  The {\bf time difference} of coordinate points
$\vx,\vy\in\Q^d$ is defined as:
\begin{equation*}
\timed(\vx,\vy)\de x_1-y_1. 
\end{equation*}
To speak about the spatial distance of any two coordinate points, we
have to use squared distance since it is possible that the distance of
two points is not amongst the quantities, e.g., the distance of points
$\langle 0,0\rangle$ and $\langle 1,1\rangle$ is $\sqrt{2}$. So in the
field of rational numbers, $\langle 0,0\rangle$ and $\langle
1,1\rangle$ do not have distance just squared distance.  Therefore, we
define the {\bf squared spatial distance} of $\vx,\vy\in\Q^d$ as:
\begin{equation*}
  \sqspace(\vx,\vy)\de (x_2-y_2)^2+\ldots+(x_d-y_d)^2.
\end{equation*}

Our next axiom is the key axiom of our axiom system of special
relativity.  This axiom is the
outcome of the Michelson-Morley experiment, and it has been
continuously tested ever since then.  Nowadays it is tested by GPS
technology.
\begin{description}
\item[\underline{\ax{AxPh:}}] For any inertial observer, the speed of
  light is the same everywhere and in every direction (and it is
  finite). Furthermore, it is possible to send out a light signal in
  any direction everywhere:
\begin{multline*}
 \IOb(m)\rightarrow \exists c_m \Big( c_m>0\land \forall \vx\vy
    \Big[\sqspace(\vx,\vy)=
   c_m^2\cdot\timed(\vx,\vy)^2\\\leftrightarrow \exists p \big[ \Ph(p)\land \W(m,p,\vx)\land
   \W(m,p,\vy)\big] \Big]\Big) .
\end{multline*}
\end{description}

Let us note here that \ax{AxPh} does not require (by itself) that the
speed of light is the same for every inertial observer.  It requires
only that the speed of light according to a fixed inertial observer is
a positive quantity which does not depend on the direction or the
location. However, by \ax{AxPh}, we can define the {\bf speed of
  light} according to inertial observer $m$ as the following binary
relation:
\begin{multline*}
\ls(m,v)\defiff v>0  \land \forall \vx\vy\Big[
\exists p  \big[\Ph(p)\land \W(m,p,\vx)\land \W(m,p,\vy)\big]\\
\rightarrow \sqspace(\vx,\vy)= v^2\cdot\timed(\vx,\vy)^2\Big]. 
\end{multline*}
By \ax{AxPh}, there is one and only one speed $v$ for every inertial
observer $m$ such that $\ls(m,v)$ holds. From now on, we will denote
this unique speed by $\ls_m$.

Our next axiom connects the worldviews of different inertial observers
by saying that they coordinatize the same ``external" reality
(the same set of events).  By the {\bf event} occurring for observer
$m$ at coordinate point $\vx$, we mean the set of bodies $m$
coordinatizes at $\vx$:
\begin{equation*}
\ev_m(\vx)\de\{ b : \W(m,b,\vx)\}.
\end{equation*}
\begin{description}
\item[\underline{\ax{AxEv}:}]
All inertial observers coordinatize the same set of events:
\begin{equation*}
 \IOb(m)\land\IOb(k)\rightarrow \exists \vy\, \forall
 b\big[\W(m,b,\vx)\leftrightarrow\W(k,b,\vy)\big].
\end{equation*}
\end{description}
From now on, we will abbreviate the
subformula $\forall b\big[ \W(m,b,\vx)\leftrightarrow\W(k,b,\vy)\big]$
of \ax{AxEv} to $\ev_m(\vx)=\ev_k(\vy)$. The next two axioms are only simplifying ones.
\begin{description}
\item[\underline{\ax{AxSelf}:}]
Any inertial observer is stationary relative to himself:
\begin{equation*}
\IOb(m)\rightarrow \forall \vx\big[\W(m,m,\vx) \leftrightarrow x_2=\ldots=x_d=0\big].
\end{equation*}
\end{description}
Our last axiom on inertial observers is a symmetry axiom saying that
they use the same units of measurement.
\begin{description}
\item[\underline{\ax{AxSymD}:}]
Any two inertial observers agree as to the spatial distance between
two events if these two events are simultaneous for both of them.
Furthermore, the speed of light is 1 for all observers:
\begin{multline*}
\IOb(m)\land\IOb(k) \land x_1=y_1\land
x'_1=y'_1\land \ev_m(\vx)=\ev_k(\vx')\\ \land
\ev_m(\vy)=\ev_k(\vy')\rightarrow \sqspace(\vx,\vy)=\sqspace(\vx',\vy'),
\mbox{ and }\\
\IOb(m)\rightarrow\exists
p\big[\Ph(p)\land\W(m,p,0,\ldots,0)\land\W(m,p,1,1,0,\ldots,0)\big].
\end{multline*}
\end{description}
Our axiom system \ax{SpecRel} is the collection of the five simple
axioms above:
\begin{equation*}
\ax{SpecRel} \de \{\ax{AxOField}, \ax{AxPh},\ax{AxEv},\ax{AxSelf},
\ax{AxSymD}\}.
\end{equation*}
 
To show that \ax{SpecRel} captures the kinematics of special
relativity, let us introduce the {\bf worldview transformation} between
observers $m$ and $k$ (in symbols, $\w_{mk}$) as the binary relation
on $\Q^d$ connecting the coordinate points where $m$ and $k$
coordinatize the same (nonempty) events:
\begin{equation*}
\w_{mk}(\vx,\vy)\defiff \ev_m(\bar
x)=\ev_k(\vy)\neq\emptyset.
\end{equation*}

Map $P:\Q^d\rightarrow\Q^d$ is called a Poincar\'e transformation iff
it is an affine bijection such that, for all $\vx,\vy,\vx',\vy'\in\Q^d$
for which $P(\vx)=\vx'$ and $P(\vy)=\vy'$,
\begin{equation*}
\timed(\vx,\vy)^2-\sqspace(\vx,\vy)=\timed(\vx',\vy')^2-\sqspace(\vx',\vy').
\end{equation*}

Thm.\ref{thm-poi} shows that our streamlined axiom system
\ax{SpecRel} perfectly captures the kinematics of special relativity
since it implies that the worldview transformations between inertial
observers are the same as in the standard non-axiomatic
approaches. For the proof of Thm.\ref{thm-poi}, see \cite{wnst}.
\begin{theorem}\label{thm-poi}
Let $d\ge3$. Assume \ax{SpecRel}. Then $\w_{mk}$ is a Poincar\'e
transformation if $m$ and $k$ are inertial observers.
\end{theorem}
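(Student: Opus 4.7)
My plan is to follow the standard light-cone strategy, adapted to the weak axiomatic setting: first show $\w_{mk}$ is a nice bijection, then show it preserves the Minkowski null structure, then invoke an Alexandrov--Zeeman-style rigidity theorem (this is where $d\ge 3$ enters), and finally remove the remaining conformal factor using \ax{AxSymD}.

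\textbf{Step 1 (structure of $\w_{mk}$).} I would first argue that $\w_{mk}$ is a bijection between the subsets of $\Q^d$ where the events are nonempty. Single-valuedness follows from \ax{AxEv} combined with an elementary uniqueness argument that uses \ax{AxSelf} to localize images (e.g., by looking at where $k$ coordinatizes himself), and surjectivity is the symmetric statement obtained by swapping the roles of $m$ and $k$ in \ax{AxEv}.

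\textbf{Step 2 (light cone preservation).} I would then show that if $\w_{mk}(\vx)=\vx'$ and $\w_{mk}(\vy)=\vy'$, then $\sqspace(\vx,\vy)=\timed(\vx,\vy)^2$ iff $\sqspace(\vx',\vy')=\timed(\vx',\vy')^2$. The key inputs are \ax{AxSymD}, which forces $\ls_m=\ls_k=1$ (by feeding the distinguished light signal through $\langle 0,\ldots,0\rangle$ and $\langle 1,1,0,\ldots,0\rangle$ into the definition of $\ls$ and invoking \ax{AxPh}), and then \ax{AxEv} which allows transporting the witnessing photon $p$ of \ax{AxPh} between the two worldviews.

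\textbf{Step 3 (Alexandrov--Zeeman rigidity).} This is the main obstacle and the reason for the $d\ge 3$ hypothesis. The task is to deduce from Step 2 that $\w_{mk}$ is the restriction of an affine bijection of $\Q^d$ preserving $\timed^2-\sqspace$ up to a positive multiplicative constant. Over an ordered field the classical Alexandrov--Zeeman argument still goes through: one shows that $\w_{mk}$ sends light lines to light lines, then uses the rich intersection pattern of light cones in dimension $\ge 3$ to recover all affine structure and a single overall dilation factor. In dimension $2$ the cone degenerates into two transverse families of lines and the conclusion genuinely fails, which is exactly why the hypothesis appears. I would cite \cite{wnst} for the detailed bookkeeping here rather than redo the computation.

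\textbf{Step 4 (normalising the scale).} The residual positive scalar $\lambda$ in the identity $\timed(\vx',\vy')^2-\sqspace(\vx',\vy')=\lambda\bigl(\timed(\vx,\vy)^2-\sqspace(\vx,\vy)\bigr)$ is killed by the first half of \ax{AxSymD}: picking $\vx,\vy$ simultaneous for both $m$ and $k$, that axiom yields $\sqspace(\vx,\vy)=\sqspace(\vx',\vy')$, which together with light cone preservation forces $\lambda=1$. Thus $\w_{mk}$ preserves $\timed^2-\sqspace$ exactly and is therefore a Poincar\'e transformation. The only genuinely hard ingredient is Step 3; the remaining steps are careful unwindings of the axioms.
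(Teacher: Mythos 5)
Your outline is essentially the proof used in the literature the paper relies on: the paper itself gives no proof of Thm.\ref{thm-poi} but defers entirely to \cite{wnst}, and the argument there is exactly your four-step scheme (worldview transformation is a well-defined bijection, lightlike separation is preserved, an Alexandrov--Zeeman-style rigidity theorem over ordered fields for $d\ge3$ yields an affine map preserving the Minkowski form up to a dilation, and \ax{AxSymD} kills the dilation factor). The only small slip is in Step 1, where injectivity of $\ev_k$ is obtained from \ax{AxPh} (distinct coordinate points have distinct light cones) rather than from \ax{AxSelf}, but this does not affect the architecture of the proof.
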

The so-called {\bf worldline} of body $b$ according to observer $m$ is
defined as:
\begin{equation*}
\wl_m(b)\de\{ \vx: \W(m,b,\vx)\}.
\end{equation*}
\begin{corollary}\label{cor-line}
Let $d\ge3$. Assume \ax{SpecRel}. The $\wl_m(k)$ is a straight line if
$m$ and $k$ are inertial observers.
\end{corollary}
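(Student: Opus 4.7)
The plan is to transfer the fact that $k$'s worldline according to himself is a straight line over to $m$'s worldview by means of the worldview transformation $\w_{mk}$, which Theorem~\ref{thm-poi} tells us is an affine bijection.

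First I would observe that, by \ax{AxSelf}, $\wl_k(k)=\{\vx\in\Q^d : x_2=\ldots=x_d=0\}$, which is a straight line (the time axis). Next I would show that $\w_{mk}$ restricts to a bijection between $\wl_m(k)$ and $\wl_k(k)$. In one direction, if $\vx\in\wl_m(k)$ then $k\in\ev_m(\vx)$, so $\ev_m(\vx)\neq\emptyset$; by \ax{AxEv} there is $\vy\in\Q^d$ with $\ev_m(\vx)=\ev_k(\vy)$, so $\w_{mk}(\vx,\vy)$, and since $k\in\ev_k(\vy)$ we have $\W(k,k,\vy)$, hence $\vy\in\wl_k(k)$. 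Conversely, if $\vy\in\wl_k(k)$ then $\ev_k(\vy)\ni k$, so by \ax{AxEv} applied with the roles of $m$ and $k$ swapped, there is $\vx$ with $\ev_m(\vx)=\ev_k(\vy)$, giving $\vx\in\wl_m(k)$ with $\w_{mk}(\vx,\vy)$. So $\w_{mk}$ maps $\wl_m(k)$ onto $\wl_k(k)$.

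Now I would invoke Theorem~\ref{thm-poi}: since $d\ge 3$ and $m,k$ are inertial, $\w_{mk}$ is a Poincaré transformation, in particular an affine bijection of $\Q^d$. Its inverse is therefore also affine, so $\wl_m(k)=\w_{mk}^{-1}(\wl_k(k))$ is the affine image of a straight line, hence itself a straight line.

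The main subtlety, rather than an obstacle, is verifying that $\wl_m(k)$ sits entirely inside the domain of the (partial) relation $\w_{mk}$, which requires the nonemptiness clause in the definition of $\w_{mk}$ — this is exactly where the fact that $k$ itself inhabits each event on its own worldline is used. Everything else is a direct appeal to Theorem~\ref{thm-poi} and the elementary fact that affine bijections send straight lines to straight lines.
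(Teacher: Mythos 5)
Your proposal is correct and is essentially the argument the paper intends: the corollary is stated without proof precisely because it follows from Thm.\ref{thm-poi} by noting that $\wl_k(k)$ is the time axis (by \ax{AxSelf}) and that the affine bijection $\w_{mk}$ carries $\wl_m(k)$ onto it, with \ax{AxEv} supplying the needed points $\vy$. Your extra care about the nonemptiness clause in the definition of $\w_{mk}$ is a legitimate detail the paper leaves implicit, not a deviation in approach.
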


To extend \ax{SpecRel} to accelerated observers, we need further
axioms. We connect the worldviews of accelerated and inertial
observers by the next axiom.
\begin{description}
\item[\underline{\ax{AxCmv}:}] At each moment of its world-line, each
  observer coordinatizes the nearby world for a short while as an
  inertial observer does.
\end{description}
Axiom \ax{AxCmv} is captured by formalizing the following statement:
at each point of the worldline of an observer there is an inertial
comoving observer such that the derivative of the worldview
transformation between them is the identity map, see, e.g.,
\cite{Synthese} \cite[\S 6]{SzPhd} for details.  We will also use the
generalized (localized) versions of axioms \ax{AxEv} and \ax{AxSelf}
of \ax{SpecRel} assumed for every observer.
\begin{description}\label{axev-}
\item[\underline{\ax{AxEv^-}:}] Observers coordinatize all the events
  in which they participate:
\begin{equation*}
\Ob(k)\land \W(m,k,\vx)\rightarrow\exists \vy\enskip
\ev_m(\vx)=\ev_k(\vy).
\end{equation*}
\end{description}

\begin{description}
\item[\underline{\ax{AxSelf^-}:}] In his own worldview, the worldline
  of any observer is an interval of the time axis containing all the
  coordinate points of the time axis where the observer coordinatizes
  something:
\begin{multline*}
\big[\W(m,m,\vx)\rightarrow x_2=\ldots= x_d =0\big]
\land\\ \big[\W(m,m,\vy)\land\W(m,m,\vz)\land
y_1<t<z_1\rightarrow \W(m,m,t,0,\ldots,0)\big]
\land\\ \exists
b \,\big[\W(m,b,t,0,\ldots,0) \rightarrow \W(m,m,t,0,\ldots,0)\big].
\end{multline*}
\end{description}
Let us add these three axioms to \ax{SpecRel} to get a theory of
accelerated observers:
\begin{equation*}
\ax{AccRel_0} \de \ax{SpecRel}\cup\{\ax{AxCmv},\ax{AxEv^-},\ax{AxSelf^-}\}.
\end{equation*}
Since \ax{AxCmv} ties the behavior of accelerated observers to the
inertial ones and \ax{SpecRel} captures the kinematics of special
relativity perfectly by Thm.\ref{thm-poi}, it is quite natural to
think that \ax{AccRel_0} is a theory strong enough to prove the most
fundamental theorems about accelerated observers.  However,
\ax{AccRel_0} does not even imply the most basic predictions of
relativity theory about accelerated observers, such as the twin
paradox. Moreover, it can be proved that even if we add the whole
first-order logic theory of real numbers to \ax{AccRel_0} is not enough
to get a theory that implies (predicts) the twin paradox, see, e.g.,
\cite{twp}, \cite[\S 7]{SzPhd}.

In the models of \ax{AccRel_0} in which the twin paradox is not true,
there are some definable gaps in $\W$. Our next assumption excludes
these gaps.
\begin{description}
\label{p-cont}
\item[\underline{\ax{CONT}:}] Every parametrically definable, bounded
  and nonempty subset of $\Q$ has a supremum (i.e., least upper bound)
  with respect to $\le$.
\end{description}
In \ax{CONT}, ``definable'' means ``definable in the language of
\ax{AccRel}, parametrically.''  \ax{CONT} is Tarski's first-order
logic version of Hilbert's continuity axiom in his axiomatization of
geometry fitted to the language of
\ax{AccRel}. For a precise formulation of \ax{CONT}, see
\cite[p.692]{twp} or \cite[\S 10.1]{SzPhd}.  When $\Q$ is the ordered
field of real numbers, \ax{CONT} is automatically true.

Let us extend \ax{AccRel_0} with axiom schema \ax{CONT}:
\begin{equation*}
\ax{AccRel}:=\ax{AccRel_0}\cup\ax{CONT}.
\end{equation*}
It can be proved that \ax{AccRel} implies the twin paradox, see
\cite{twp}, \cite[\S 7.2]{SzPhd}.

That \ax{CONT} requires the existence of supremum only for sets
definable in \ax{AccRel} instead of every set is important because it
makes our postulate closer to the physical/empirical level. This is
true because \ax{CONT} does not speak about ``any fancy subset'' of
the quantities, but just about those ``physically meaningful'' sets
which can be defined in the language of our (physical) theory.

Let us now introduce some auxiliary axioms we will use here but not
listed so far.  To do so, let us call a linear bijection of $\Q^d$
{\bf trivial transformation} if leaves the time components (i.e.,
first coordinates) of coordinate points unchanged and it fixes the
points of the time axis, i.e., the set of trivial transformation is:
\begin{multline*}
\Triv:=\{\,T: T \mbox{ is a linear bijection of } \Q^d,\\ T(\vy)_1=\vy_1
\mbox{ and } T(\vx)=\vx \mbox{ if } \vx_s=\vo\,\},
\end{multline*} 
where $\vo$ denotes the {\bf origin}, i.e., coordinate point $\langle
0,\ldots,0\rangle$.

\begin{description}
\item[\underline{\ax{AxThExp^\#}:}] Inertial observers can move with
  any speed less than the speed of light and new inertial reference
  frames can be constructed from other inertial reference frames by
  transforming them by trivial transformations and translations along
  the time axis:\footnote{Since linear bijections of $\Q^d$ can be
    represented by a matrix of $d\times d$ quantities, the
    quantification $\forall T\enskip T\in\Triv$ in \ax{AxTheExp^\#}
    can easily turned into a quantification over quantities.}

\begin{multline*}
\exists h \, \IOb(h)\land
\Big[\IOb(m)\land \sqspace(\vx,\vy)<\ls_m^2\cdot\timed(\vx,\vy)^2
  \\\land T\in\Triv \rightarrow \exists km'
  \big[\IOb(k)\land\IOb(m')\land \W(m,k,\vx)\land\W(m,k,\vy)\\\land
    \ev_m(\vx)=\ev_k(\vo)\land \w_{mm'}=T\big]\Big].
\end{multline*}
\end{description}
The following axiom is a consequence of the principle of
relativity. See \cite{MPhd}, \cite{FTLconsSR} for a formalization of
the principle of relativity in our first-order logic language.

\begin{description}
\item[\underline{\ax{AxVel}:}] If one observer can send out a body with
  a certain speed in a certain direction, then any other inertial
  observer can send out a body with this speed in this direction.
\begin{multline*}
\IOb(m)\land
  \IOb(k)\rightarrow\\ \Big[\exists b \big[
  \W(m,b,\vx)\land\W(m,b,\vy)\big] \leftrightarrow \exists
  b\big[ \W(k,b,\vx)\land\W(k,b,\vy)\big] \Big].
\end{multline*}
\end{description}
We call body $b$ {\bf inertial body} iff there is an inertial
  observer $m$ according to who $b$ moves with uniform rectilinear
  motion:
\begin{multline*}
\IB(b) \defiff \exists m\vx\vy\Big[\IOb(m)\land \vx\neq\vy\land
  \W(m,b,\vx)\land \W(m,b,\vy)\land\\\forall \vz\big(
  \W(m,b,\vz)\leftrightarrow \exists \lambda \big[
    \vz=\vx+\lambda(\vy-\vx)\big]\big)\Big].
\end{multline*}

 Let us now formulate the possibility of the existence of
  FTL inertial bodies.
\begin{description}
\item[\underline{\ax{\exists FTLBody}:}] There is an inertial observer
  who can send out an FTL inertial body:
\begin{multline*}
\exists mb\vx\vy \big[\IB(b)\land \IOb(m)\land \W(m,b,\vx)\land
\W(m,b,\vy)\land \\\sqspace(\vx,\vy)>\ls_m^2\cdot \timed(\vx,\vy)^2\big].
\end{multline*}
\end{description}

\ax{\exists FTLBody} implies that inertial observers can send out
a body with arbitrary large speed in any direction if \ax{SpecRel},
\ax{AxThExp^\#}, \ax{CONT} and \ax{AxVel} are assumed:
\begin{proposition}\label{prop-nolim} Let $d\ge 3$.
Assume \ax{SpecRel}, \ax{AxThExp^\#}, \ax{CONT}, \ax{AxVel} and \ax{\exists
  FTLBody}. Then any inertial observer can send out a body with any
speed in any direction:
\begin{equation*}
\IOb(m) \rightarrow \exists b \big[ \W(m,b,\vx)\land \W(m,b,\vy)\big].
\end{equation*}
\end{proposition}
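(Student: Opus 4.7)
My plan is to split on the type of the separation between $\vx$ and $\vy$ in $m$'s frame. If $\vx = \vy$, pick any observer participating in the event $\ev_m(\vx)$. If $\sqspace(\vx,\vy) < \ls_m^2 \cdot \timed(\vx,\vy)^2$, then \ax{AxThExp^\#} directly yields an inertial observer $k$ whose worldline in $m$'s frame runs through $\vx$ and $\vy$, and $b := k$ works. If $\sqspace(\vx,\vy) = \ls_m^2 \cdot \timed(\vx,\vy)^2$, then \ax{AxPh} supplies a photon through both points. The substantive case is $\sqspace(\vx,\vy) > \ls_m^2 \cdot \timed(\vx,\vy)^2$ (FTL separation), where one has to genuinely use \ax{\exists FTLBody}.

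For the FTL case, fix the inertial FTL body provided by \ax{\exists FTLBody}: an inertial $m_0$, an inertial body $b_0$ and coordinates $\vx_0, \vy_0$ with $\W(m_0, b_0, \vx_0) \land \W(m_0, b_0, \vy_0)$ and $\sqspace(\vx_0, \vy_0) > \timed(\vx_0, \vy_0)^2$. Because $b_0$ is inertial, $\wl_{m_0}(b_0)$ is the line through $\vx_0$ and $\vy_0$, hence $\W(m_0, b_0, \vx_0 + \lambda(\vy_0 - \vx_0))$ holds for every $\lambda \in \Q$, and the squared Minkowski separation between the two endpoints equals $\lambda^2 \cdot (\sqspace(\vx_0, \vy_0) - \timed(\vx_0, \vy_0)^2)$. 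Using \ax{CONT} to extract square roots of positive quantities, choose $\lambda \in \Q$ so that this equals $\sqspace(\vx, \vy) - \timed(\vx, \vy)^2$. Now \ax{AxVel} promotes the existence of this body to every inertial observer: for every inertial $k$ there is a body $b_k$ at the coordinates $\vx_0$ and $\vx_0 + \lambda(\vy_0 - \vx_0)$ in $k$'s frame.

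It then suffices to find an inertial $k$ such that the worldview transformation $w_{km}$ — a Poincar\'e map by Thm.\ref{thm-poi} — sends $\vx_0 \mapsto \vx$ and $\vx_0 + \lambda(\vy_0 - \vx_0) \mapsto \vy$; for then $b := b_k$ satisfies $\W(m, b, \vx) \land \W(m, b, \vy)$ since $\ev_k(\vx_0) = \ev_m(\vx)$ and $\ev_k(\vx_0 + \lambda(\vy_0 - \vx_0)) = \ev_m(\vy)$. Such a Poincar\'e map exists abstractly: both pairs are space-like with equal squared Minkowski separation and, for $d \ge 3$, the proper orthochronous Lorentz group acts transitively on space-like vectors of a given Minkowski norm. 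The main obstacle I expect is the realization step — exhibiting this abstract map as an actual $w_{km}$. The plan is to decompose it as translation $\circ$ rotation $\circ$ boost and then assemble $k$ using the two clauses of \ax{AxThExp^\#}: the first clause supplies inertial observers of any sub-light velocity through any prescribed event (giving boost plus translation), while the second clause lets us compose with any trivial transformation, hence in particular with any spatial rotation. This assembly, together with \ax{AxEv} to identify events across frames, is the bulk of the work.
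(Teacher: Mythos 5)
Your proposal is correct and follows essentially the same route as the paper's proof: the same case split on the causal type of the pair $\vx,\vy$ (with \ax{AxThExp^\#} for timelike, \ax{AxPh} for lightlike), and in the spacelike case the same combination of \ax{\exists FTLBody}, \ax{CONT} for square roots, Poincar\'e transformations realized through Thm.\ref{thm-poi} and the boost/rotation clauses of \ax{AxThExp^\#} to match the invariant separation, and finally \ax{AxVel} to transfer the body to the given observer $m$. The only differences are cosmetic: you normalize via a $\lambda$-scaling along the FTL worldline where the paper matches speeds by slanting the simultaneity, and you apply \ax{AxVel} before rather than after choosing the auxiliary frame.
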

The proof of Prop.\ref{prop-nolim} is in Sect.\ref{sec-proof}.

\section{Hypercomputation in AccRel}
\label{sec-HC}
In this section, we formulate our statement on the logical equivalence
between the existence of FTL signals and the possibility of
hypercomputation in special relativity as a theorem in our first-order
logic language.  To formulate the possibility of hypercomputation as a
formula of our first-order logic language, let us define the {\bf
  life-curve} $\lc_{m}(k)$ of observer $k$ according to observer $m$
as the world-line of $k$ according to $m$ {\it parametrized by the
  time measured by $k$},\label{life-curve} formally:
\begin{equation*}
\lc_{m}(k)\de\{\, \langle t,\vx \rangle\in \Q\times \Q^d \::
\:\exists \vy\enskip k\in \ev_k(\vy)=\ev_m(\vx)\land y_1=t\,\}.
\end{equation*}
 The {\bf range} and {\bf domain}
of a binary relation $R$, is defined as:
\begin{equation*}
  \ran R \de \{\, y : \exists x\enskip R(x,y) \,\} \quad\mbox{and}\quad  \dom R \de \{\, x : \exists y\enskip R(x,y) \,\}.
\end{equation*}

The following formula of our language captures the possibility of
relativistic hypercomputation in the sense used in the theory of
relativistic computation.
\begin{description}\label{HC}
\item[\underline{\ax{HypComp}:}] There are two observers a programmer
  $p$ and a computer $c$ and an instant $\tau$ in the programmer's
  worldline such that the computer has infinite time to compute, and
  during its computation the computer can send a signal $s_t$ to the
  programmer which reaches the programmer before the fixed instant:
\begin{multline*}
  \exists pc\tau \Big[ \Ob(p)\land \Ob(c) \land\forall mx\Big(
      \IOb(m)\land x\ge 0\rightarrow x\in \dom \lc_m(c) \land\\
      \forall t \Big[ t>0\rightarrow\exists t' s_t \big[
      0<t'<\tau \land s_t\in\ev_m\big(\lc_m(c)(t)\big)\cap \ev_m\big(\lc_m(p)(t')\big)\big]\Big]\Big)\Big].
\end{multline*}
\end{description}
 The following axiom ensures the existence of uniformly
accelerated observers.
\begin{description}
\item[\underline{\ax{Ax\exists UnifOb}:}] It is possible to accelerate
  an observer uniformly:\footnote{In relativity theory, uniformly
    accelerated observers are moving along hyperbolas, see, e.g.,
    \cite[\S 3.8, pp.37-38]{dinverno}, \cite[\S 6]{MTW}, \cite[\S
      12.4, pp.267-272]{Rindler}.}
\begin{multline*}
\IOb(m)\rightarrow \exists k \Big[ \Ob(k) \land \dom \lc_m(k)=\Q\\\land
  \forall \vx \big[ \vx\in \ran \lc_m(k) \leftrightarrow
    x_2^2-x_1^2=a^2\land x_3=\ldots=x_d=0\big]\Big].
\end{multline*}
\end{description}

Now we can state our theorem on the logical equivalence between the
existence of FTL signals and the possibility of hypercomputation in
special relativity:
\begin{theorem}\label{thm-hc}
Let $d\ge3$. Then
\begin{equation*}
\{\ax{AccRel},\ax{AxThExp^\#},\ax{Ax\exists Unifob},\ax{AxVel}\}
\models \ax{\exists FTLBody}\leftrightarrow \ax{HypComp} .
\end{equation*}
\end{theorem}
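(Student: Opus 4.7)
The plan is to prove each direction of the biconditional separately, with the hyperbolic worldline of a uniformly accelerated observer playing a central role in both.

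\textbf{Forward direction.} To show $\ax{\exists FTLBody} \rightarrow \ax{HypComp}$, I would first invoke Prop.~\ref{prop-nolim} to obtain that any inertial observer can send out inertial bodies at arbitrary speed in any spatial direction. Pick an inertial observer $m$ and let the programmer $p$ be inertial with worldline equal to the $x_1$-axis of $m$. Use $\ax{Ax\exists UnifOb}$ to obtain a uniformly accelerated observer $c$ whose $m$-worldline is the hyperbola $\{\vx : x_2^2 - x_1^2 = a^2,\ x_3 = \dots = x_d = 0\}$, with $\dom \lc_m(c) = \Q$. The decisive geometric fact is that at each event $E_t = \lc_m(c)(t)$, the simultaneity hyperplane of the comoving inertial observer at $E_t$ passes through the origin $\vo$; this follows from a direct tangent computation on the defining equation of the hyperbola. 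Fix $\tau > 0$. For every $t > 0$, a straight line from $E_t$ with a sufficiently steep FTL slope hits $p$'s worldline at an $x_1$-coordinate in $(0,\tau)$, and by Prop.~\ref{prop-nolim} an inertial body $s_t$ with this exact straight worldline exists. This $s_t$ witnesses $\ax{HypComp}$; the configuration transfers to every inertial frame by appealing to the Poincar\'e structure of worldview transformations (Thm.~\ref{thm-poi}).

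\textbf{Reverse direction.} To show $\ax{HypComp} \rightarrow \ax{\exists FTLBody}$, assume $\ax{HypComp}$ with witnesses $p,c,\tau$ and signals $s_t$. Fix any inertial $m$ (guaranteed by $\ax{AxThExp^\#}$) and set $E^{\star} = \lc_m(p)(\tau)$. Suppose for contradiction that no $s_t$ moves faster than light in $m$'s frame. Then for each $t > 0$ the event $\lc_m(c)(t)$ lies in the causal past of the corresponding $\lc_m(p)(t')$ with $t' < \tau$, hence in the causal past of $E^{\star}$. Thus the entire $t > 0$ portion of $c$'s worldline lies in the causal past of a single event, contradicting the twin paradox theorem (provable from $\ax{AccRel}$), which bounds the proper time any observer can accumulate in such a region against $\dom \lc_m(c) = \Q$. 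Consequently, some $s_t$ lies at two spacelike-separated events $\vx, \vy$ in $m$'s frame.

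\textbf{Main obstacle.} The delicate remaining step is to promote this non-inertial FTL signal $s_t$ to an \emph{inertial} FTL body, as required by $\ax{\exists FTLBody}$. The plan is to combine $\ax{AxVel}$, which produces a body at the coordinates $\vx, \vy$ in every inertial worldview, with $\ax{AxThExp^\#}$ and Thm.~\ref{thm-poi} to pass to an inertial frame in which the line through $\vx$ and $\vy$ takes a canonical spacelike form, and then to exploit the trivial-transformation and new-frame portion of $\ax{AxThExp^\#}$, together with $\ax{CONT}$, to produce a body whose entire worldline is precisely this straight line. This essentially runs the symmetry argument behind Prop.~\ref{prop-nolim} in reverse. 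The difficulty is that no single axiom directly asserts the existence of an inertial body joining two arbitrary spacelike-separated events, so the argument must thread carefully through $\ax{AxVel}$, the symmetry content of $\ax{AxThExp^\#}$, and continuity, rather than rely on any stronger geometric postulate.
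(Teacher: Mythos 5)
Your proposal follows essentially the same route as the paper's proof in both directions. The forward direction is the paper's argument verbatim in outline: Prop.~\ref{prop-nolim} supplies bodies through any two coordinate points, \ax{Ax\exists UnifOb} supplies the hyperbolic computer worldline with $\dom\lc_m(c)=\Q$, the tangent computation (the paper's Lem.~\ref{lem-tan} feeding Prop.~\ref{prop-sim}) shows the comoving simultaneities pass through the origin, and Prop.~\ref{prop-wl} transfers the configuration to every inertial frame. (Minor quibble: Prop.~\ref{prop-nolim} only yields a body present at the two chosen points, not a body whose entire worldline is the chosen straight line, but that is all \ax{HypComp} asks for.) The reverse direction is likewise the paper's: confine $\wl_m(c)$ to $I^-\big(\lc_m(p)(\tau)\big)$ and contradict $\dom\lc_m(c)=\Q$ via the twin paradox theorem, which \ax{AccRel} proves. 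The one substantive divergence is your ``main obstacle'': the paper does \emph{not} perform the promotion of a non-inertial FTL signal to an inertial FTL body. Its converse proof passes directly from ``there are no FTL particles'' to the causal-past containment, tacitly identifying ``some body occupies two spacelike-separated coordinate points'' with \ax{\exists FTLBody}, even though the latter formally requires an \emph{inertial} body satisfying $\IB$. So you have correctly isolated the one step the published argument leaves informal; be aware, though, that your sketched repair is not in the paper, and that \ax{AxVel} and \ax{AxThExp^\#} transfer the existence of a body at two given points between frames without straightening its worldline, so completing that step would require a genuinely new argument (or a reading of the axioms under which it is unnecessary).
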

The proof of Thm.\ref{thm-hc} is in Sect.\ref{sec-proof}.

\section{Proofs}
\label{sec-proof}
In this section, we prove Prop.\ref{prop-nolim} and
  Thm.\ref{thm-hc}.

\begin{proof}[Proof of Prop.\ref{prop-nolim}]
Let $m$ be an inertial observer and let $\vx,\vy\in\Q^d$.  By
\ax{AxSymDist}, $\ls_m=1$.  If $\sqspace(\vx,\vy)<\timed(\vx,\vy)^2$,
then there is a body (moreover, an inertial observer) $k$ such that
$\W(m,k,\vx)$ and $\W(m,k,\vy)$ by \ax{AxTheExp^\#}. If
$\sqspace(\vx,\vy)=\timed(\vx,\vy)^2$, then there is a body (moreover,
a light signal) $p$ such that $\W(m,p,\vx)$ and $\W(m,p,\vy)$ by
\ax{AxPh}. So we only have to show that there is a body $b$ such that
$\W(m,b,\vx)$ and $\W(m,b,\vy)$ if
$\sqspace(\vx,\vy)>\timed(\vx,\vy)^2$. By \ax{\exists FTLBody}, there
is an inertial observer who can send out an inertial body $b$
with a certain speed which is faster than the speed of light.  By
  \ax{CONT}, the quantity structure $\langle Q,+,\cdot,\le\rangle$ is
  a real closed field, see \cite[Prop.10.1.2]{SzPhd}. Specially every
  positive number has a square root. Therefore, by \ax{AxThExp^\#}, we
  can rotate the worldview of any observer around the time axis by an
  arbitrary angle; and by Thm.\ref{thm-poi} and axiom \ax{AxThExp^\#},
  there is an inertial observer whose simultaneity is so slanted
  that he sees $b$ moving with speed
  $\sqrt{\sqspace(\vx,\vy)/\timed(\vx,\vy)^2}$. Consequently, there
is an inertial observer who coordinatizes inertial body $b$ moving through
$\vx$ and $\vy$. Then, by \ax{AxVel}, every inertial observer can send
out a body moving through $\vx$ and $\vy$.\qed
\end{proof}

\begin{proof}[Proof of Thm.\ref{thm-hc}]
Assume \ax{AccRel}, \ax{AxThExp^\#}, \ax{Ax\exists Unifob},
\ax{AxVel}, and \ax{\exists FTLBody}. We have to prove \ax{HypComp}.
Let $p$ be an arbitrary inertial observer. Let $\tau=1$. Let $c$ be a
uniformly accelerated observer such that $\W(p,c,\vx)$ iff
$x_2^2-x_1^2=1$ and $x_3=\ldots=x_d=0$, see Fig.\ref{fig-hc}. This
observer $c$ exists and $\dom \lc_p(c)=\Q$ by \ax{Ax\exists UnifOb}
and Prop.\ref{prop-wl} below. By Prop.\ref{prop-sim} below, the simultaneity
of any comoving observer $k$ of $c$ at the event of their meeting goes
through the origin. So by Prop.\ref{prop-nolim}, $k$ any comoving
observer of $c$ (and thus $c$) can send out a body reaching $p$ before
$M=\lc_p(p)(\tau)$ and after $O=\lc_p(p)(0)$, i.e.,
\begin{equation*}
\forall t \Big[ t>0\rightarrow\exists t' s_t \big[ 0<t'<\tau \land
    s_t\in\ev_p\big(\lc_p(c)(t)\big)\cap
    \ev_p\big(\lc_p(p)(t')\big)\big]\Big].
\end{equation*}
Let now $m$ be an arbitrary inertial observer and $x\ge0$.  By
Prop.\ref{prop-sim} below, $\dom \lc_m(c)=\dom\lc_p(c)$. Therefore,
$x\in\dom\lc_m(c)=\Q$. Also by Prop.\ref{prop-wl},
$\ev_m\big(\lc_m(c)(t)\big)=\ev_p\big(\lc_p(c)(t)\big)$ for all
$t\in\Q$.  Therefore,
\begin{equation*}
s_t\in\ev_m\big(\lc_m(c)(t)\big)\cap \ev_m\big(\lc_m(p)(t')\big)
\leftrightarrow s_t\in\ev_p\big(\lc_p(c)(t)\big)\cap
\ev_p\big(\lc_p(p)(t')\big).
\end{equation*}
Consequently,
\begin{equation*}
\forall t \Big[ t>0\rightarrow\exists t' s_t \big[ 0<t'<\tau \land
    s_t\in\ev_m\big(\lc_m(c)(t)\big)\cap
    \ev_m\big(\lc_m(p)(t')\big)\big]\Big].
\end{equation*}
This completes the proof of \ax{HypComp}.

\medskip 
To prove the converse direction, assume that both $\neg \ax{\exists
  FTLBody}$ and \ax{HypComp} hold. Let $p$ and $c$ be arbitrary
observers, and $\tau$ be an arbitrary time instant such that
\ax{HypComp} holds for $p$, $c$ and $\tau$, see
Fig.\ref{fig-nohc}. Since the computer can send a signal to the
programmer during its life from any instant $t>0$ and and there are no
FTL particles, $\wl_m(c)\subseteq
I^-\big(\lc_m(p)(\tau)\big)$\footnote{$I^-(\vx)$ denotes the {\bf
    causal past} of coordinate point $\vx$, i.e., $I^-(\vx)\de\{\, \vy
  \::\: y_1\le x_1\land \timed(\vx,\vy)^2\ge \sqspace(\vx,\vy)\,\}$.}
according to any inertial observer $m$. By the twin paradox theorem,
see, e.g., \cite{twp}, \cite[Thm.7.2.2]{SzPhd}, $c$ maximizes its time
if it moves along a straight line. From this fact, it is easy to see
that the longest path in $I^-\big(\lc_m(p)(\tau)\big)$ starting at
$\lc_m(c)(0)$ is the line segment connecting $\lc_m(c)(0)$ and
$\lc_m(p)(\tau)$. Since even this path is finite, $c$ has only a
finite time to compute. Therefore, subformula $\forall \; x\ge 0
\rightarrow x\in\dom \lc_m(c)$ of \ax{HypComp} cannot be true. This
contradiction proves our statement. \qed\end{proof}

\begin{figure}
\begin{center}
\psfrag{s}[br][br]{FTL signal} 
\psfrag{p}[tl][tl]{Programmer}
\psfrag{po}[tr][tr]{$\lc_m(p)(0)$} 
\psfrag{c}[tl][tl]{Computer} 
\psfrag{t}[br][br]{$\lc_m(p)(\tau)$}
\psfrag{I}[tl][tl]{$I^-\big(\lc_m(p)(\tau)\big)$}
\psfrag{o}[tl][tl]{$\lc_m(c)(0)$} 
\includegraphics[keepaspectratio,width=0.7\textwidth]{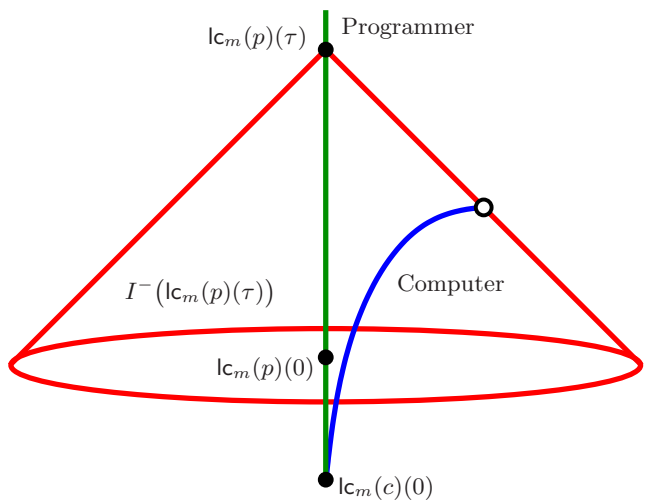}
\caption{Illustration for the proof of Thm.\ref{thm-hc}}
\label{fig-nohc}
\end{center}
\end{figure}

\begin{proposition}\label{prop-sim}
Let $d\ge 3$. Assume \ax{AccRel}.  Let $p$ be an inertial observer and
$c$ be a uniformly accelerated observer such that $\W(p,c,\vx)$ iff
$x_2^2-x_1^2=a^2$ and $x_3=\ldots=x_d=0$ for some $0\neq a\in\Q$. Then
the simultaneity of any comoving inertial observer $k$ of $c$ at $\vr$
through $\vr$, i.e., $\{\vy:\w_{mk}(\vy)_t=\w_{mk}(\vr)_t\}$, contains
the origin $\vo$.
\end{proposition}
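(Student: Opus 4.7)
The plan is to exploit two Minkowski-geometric facts. First, the hyperbola $x_2^2-x_1^2=a^2$ is a level set of the Minkowski quadratic form in the $(x_1,x_2)$-plane, so its tangent at any point $\vr$ is Minkowski-orthogonal to the position vector $\vr$ itself. Second, the simultaneity of an inertial observer through an event is the Minkowski-orthogonal hyperplane to his time direction. Putting these together, the simultaneity of $k$ through $\vr$ will contain the entire line $\{\lambda\vr\::\:\lambda\in\Q\}$, and in particular the origin $\vo$.

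Carrying this out, I would first write $\vr=\langle r_1,r_2,0,\ldots,0\rangle$ with $r_2^2-r_1^2=a^2$ (so in particular $r_2\ne 0$), and use \ax{AxCmv} to obtain a comoving inertial observer $k$ of $c$ at $\vr$ whose time direction in $p$'s frame coincides with the tangent to $\wl_p(c)$ at $\vr$. By Thm.\ref{thm-poi}, $\w_{pk}$ is a Poincar\'e transformation, i.e., an affine bijection preserving $\timed(\vx,\vy)^2-\sqspace(\vx,\vy)$. A short linear-algebra computation using this defining property of Lorentz maps then identifies $k$'s simultaneity through $\vr$ in $p$'s frame with the affine hyperplane
\begin{equation*}
\{\vy\in\Q^d \::\: u_1(y_1-r_1) - u_2(y_2-r_2) - \ldots - u_d(y_d-r_d) = 0\},
\end{equation*}
where $\vu=\langle u_1,\ldots,u_d\rangle$ is any nonzero vector along the tangent to $\wl_p(c)$ at $\vr$.

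Second, I would compute $\vu$ explicitly. Implicit differentiation of $x_2^2-x_1^2=a^2$ at $\vr$ yields $\vu=\langle r_2,r_1,0,\ldots,0\rangle$ up to a nonzero scalar. Substituting $\vy=\vo$ into the simultaneity equation then gives $r_2(-r_1)-r_1(-r_2)=0$, so $\vo$ indeed lies on $k$'s simultaneity through $\vr$.

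The main obstacle is to justify within the axiomatic framework the two analytic inputs: (i) that the tangent direction to $\wl_p(c)$ at $\vr$ exists and equals $\langle r_2,r_1,0,\ldots,0\rangle$, and (ii) that \ax{AxCmv} forces the time direction of $k$ in $p$'s frame to coincide with this tangent. Both rely on the differential-calculus setup for definable functions over the real closed field $\Q$ (guaranteed by \ax{CONT}), developed in \cite{Synthese} and \cite[\S 6]{SzPhd} precisely in connection with \ax{AxCmv}. Once these ingredients are in place, the remainder of the proof collapses to the one-line Minkowski-orthogonality check above; notice that this check does not require $r_2^2-r_1^2=a^2$ beyond what is needed to guarantee $r_2\ne 0$.
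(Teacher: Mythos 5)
Your proposal follows essentially the same route as the paper's proof: identify $k$'s simultaneity as the Minkowski-orthogonal hyperplane to $\wl_p(k)$ via Thm.\ref{thm-poi}, use \ax{AxCmv} to identify $\wl_p(k)$ with the tangent to the hyperbola at $\vr$ (direction $\langle r_2,r_1,0,\ldots,0\rangle$), and verify by the same one-line computation that $\vo$ satisfies the orthogonality equation. The only cosmetic difference is that the paper isolates the tangent-line computation as Lem.\ref{lem-tan} and justifies it by Tarski's transfer principle for real closed fields rather than by setting up implicit differentiation directly, but this is the same fact obtained by equivalent means.
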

\begin{proof}
Let $p$ be an inertial observer, $c$ be a uniformly accelerated
observer, $\vr=\langle r_1,r_2,0,\ldots,0\rangle$ be a point in the
world-line of $c$, and $k$ be an inertial comoving observer of $c$ at
$\vr$. By Thm.\ref{thm-poi}, $\w_{mk}$ is a Poincar\'e
transformation.  Therefore, the simultaneity of $k$ is
Minkowski-orthogonal to his worldline, i.e.,
\begin{multline*}
\forall \vx\vy\vz \Big[\W(m,k,\vx)\land \W(m,k,\vz)\land \vz \neq \vx
  \rightarrow \big[\w_{mk}(\vx)_t=\w_{mk}(\vy)_t\leftrightarrow\\
    (y_1-x_1)(z_1-x_1)=(y_2-x_2)(z_2-x_2)+\ldots+(y_d-x_d)(z_d-x_d)\big]\Big].
\end{multline*}
Therefore, we have to show that line $\vo\vr$ is Minkowski-orthogonal
to $\wl_m(k)$.  By \ax{AxCmv}, the worldline of $k$ is the tangent
line of the worldline of $c$ at $\vr$. Therefore, by Lem.\ref{lem-tan}
below, $\wl_m(k)=\{\vx: r_2x_2=r_1x_1+a^2, x_3=\ldots=x_d=0\}$. Let
$\vz$ be a point of $\wl_m(k)$ different from $\vr$. We have to show
that $(0-r_1)(z_1-r_1)=(0-r_2)(z_2-r_2)$. This equation is the same as
$r_2z_2-r_1z_1=r_2^2-r_1^2$, which follows straightforwardly from
$r_2z_2=r_1z_1+a^2$, i.e., $\vz\in\wl_m(k)$, and $r_2^2=r_1^2+a^2$,
i.e., $\vz\in\wl_m(k)$. Thus line $\vo\vr$ is Minkowski-orthogonal
to $\wl_m(k)$; and this is what we wanted to prove.\qed
\end{proof}

\begin{lemma} \label{lem-tan}
Assume \ax{AxOField} and \ax{CONT}. The tangent line of hyperbola 
\begin{equation*}
\{\vx : x_2^2-x_1^2=a^2, x_3=\ldots=x_d=0\}
\end{equation*}
at its point $\langle
r_1,r_2,0,\ldots,0\rangle$ is 
\begin{equation*}
\{\vx: r_2x_2=r_1x_1+a^2, x_3=\ldots=x_d=0\}.
\end{equation*}
\end{lemma}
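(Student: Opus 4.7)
The plan is to reduce the statement to the $(x_1,x_2)$-plane and then verify tangency via the algebraic double-root criterion. Since the hyperbola and the candidate line both enforce $x_3=\ldots=x_d=0$, everything happens in this plane. Write $\vr=\langle r_1,r_2,0,\ldots,0\rangle$ for the point of tangency and $L$ for the claimed line $r_2x_2=r_1x_1+a^2$. First, $r_2\ne 0$ — because $r_2^2=r_1^2+a^2\ge a^2>0$ — so $r_2$ is invertible; and substituting $\vr$ into $L$ gives $r_2^2-r_1^2=a^2$, which holds by the hyperbola equation, so $\vr\in L$.

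The main step is the tangency verification. Solve $L$ for $x_2=(r_1x_1+a^2)/r_2$ and substitute into $x_2^2-x_1^2=a^2$. After clearing denominators this becomes $(r_1^2-r_2^2)x_1^2+2r_1a^2x_1+a^4-a^2r_2^2=0$. Applying $r_1^2-r_2^2=-a^2$ to the leading coefficient and $a^2-r_2^2=-r_1^2$ to the constant term collapses this to $-a^2(x_1-r_1)^2=0$. Hence $L$ meets the hyperbola only at $\vr$, with multiplicity two — the algebraic criterion for $L$ to be the tangent line at $\vr$. Since a smooth point of a nondegenerate conic admits a unique tangent, $L$ is it.

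The main obstacle is conceptual rather than computational: the excerpt never formally defines ``tangent line,'' so one must commit to a definition. The algebraic route above is the shortest, works in any field of characteristic $\ne 2$, and uses only \ax{AxOField}. If one prefers the analytic definition — more natural given that \ax{AxCmv} elsewhere is phrased via derivatives — then \ax{CONT} provides a square-root function on $[0,\infty)$; one parametrizes the branch through $\vr$ as the graph $x_2=\varepsilon\sqrt{x_1^2+a^2}$ with $\varepsilon$ equal to the sign of $r_2$, computes the slope $r_1/r_2$ at $x_1=r_1$ by the chain rule, and rearranges the point-slope form $x_2-r_2=(r_1/r_2)(x_1-r_1)$ to $r_2x_2=r_1x_1+a^2$ using $r_2^2-r_1^2=a^2$. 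Either way, the essential content is the identity just displayed.
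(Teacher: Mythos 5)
Your proof is correct, but it takes a genuinely different route from the paper's. The paper computes nothing: it notes that \ax{AxOField} together with \ax{CONT} makes the quantity structure a real closed field, invokes Tarski's theorem that all real closed fields are elementarily equivalent, observes that the lemma is expressible in the first-order language of ordered fields, and then declares the statement ``straightforward'' over the field of real numbers, whence it transfers to every model. You instead work directly inside the ordered field: substituting the line into the hyperbola equation and collapsing the resulting quadratic to $-a^2(x_1-r_1)^2=0$ via $r_2^2-r_1^2=a^2$ gives contact of multiplicity two at $\vr$, and your preliminary checks ($r_2\neq 0$, $\vr\in L$) are exactly the needed non-degeneracy. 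What each approach buys: the transfer argument lets the paper stay agnostic about how ``tangent line'' is defined, so long as the chosen definition is first-order over the ordered field --- which is precisely why \ax{CONT} appears in the hypotheses --- at the cost of pushing the actual content into an unverified ``straightforward over $\mathbb{R}$'' claim. Your double-root argument is self-contained and makes the definition of tangency explicit; your observation that the algebraic criterion needs only \ax{AxOField}, with \ax{CONT} required only for the analytic (derivative-based) variant via the parametrization $x_2=\varepsilon\sqrt{x_1^2+a^2}$, is a genuine sharpening of the lemma as stated. One caution: the lemma is used inside the proof of Prop.\ref{prop-sim}, where the tangent in question is the worldline of the comoving observer supplied by \ax{AxCmv}, an axiom phrased in terms of derivatives; so it is the analytic variant of your argument (or an explicit equivalence between the two notions of tangency) that matches the intended application, and you should not silently discard it in favour of the purely algebraic one.
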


\begin{proof} Axioms \ax{AxOField} and \ax{CONT} imply
that $\Q$ is a real closed field, see \cite[Prop.10.1.2]{SzPhd}. By
Tarski's theorem, real closed fields are elementarily equivalent, see
\cite{tarski-dmethod}. Thus something which is expressible in the
language of ordered fields is true in a real closed field iff it is
true in the field of real numbers.  The statement of this lemma can be
formalized in the language of ordered fields and it is straightforward
to show it in the ordered field of real numbers. Therefore, by
Tarski's theorem, the statement is true in every model of
\ax{AxOField} and \ax{CONT}; and this is what we waned to prove.\qed
\end{proof}
The following can be proved about life-curves, see
\cite[Prop. 6.1.6]{SzPhd}.
\begin{proposition}\label{prop-wl}
Let $m$, $k$  and $h$ be observers. Then 
\begin{enumerate}
\item $\wl_m(k)=\ran \lc_m(k)$ if \ax{AxEv^-} is assumed.
\item $\lc_m(k)$ is a function if \ax{AxPh} and \ax{AxSelf^-} are
  assumed and $m$ is an inertial observer.
\item $\dom \lc_m(h)=\dom\lc_k(h)$ and
  $\ev_m\big(\lc_m(h)(t)\big)=\ev_k\big(\lc_k(h)(t)\big)$ holds for
  all $t\in\dom \lc_m(h)$ if $m$ and $k$ are inertial
  observers and \ax{AxEv} is assumed.
\end{enumerate}
\end{proposition}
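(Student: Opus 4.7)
The plan is to handle the three items in turn. Parts (1) and (3) amount to unpacking the definition of $\lc_m(k)$ and invoking the listed axiom, while the only nontrivial step lies inside part~(2), where I need an injectivity argument about inertial worldviews. Throughout I will unfold $\langle t,\vx\rangle \in \lc_m(k)$ as: there exists $\vy$ with $y_1 = t$, $\W(k,k,\vy)$, and $\ev_m(\vx) = \ev_k(\vy)$.

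For part~(1), if $\vx \in \ran\lc_m(k)$ then a witness $\vy$ gives $k \in \ev_k(\vy) = \ev_m(\vx)$, hence $\W(m,k,\vx)$, so $\vx \in \wl_m(k)$. Conversely, starting from $\W(m,k,\vx)$ I would apply \ax{AxEv^-} with $o := k$ to obtain a $\vy$ with $\ev_m(\vx) = \ev_k(\vy)$; the membership $k \in \ev_m(\vx)$ transfers to $\ev_k(\vy)$, and $\langle y_1,\vx\rangle$ certifies $\vx \in \ran\lc_m(k)$.

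For part~(2), suppose $\langle t,\vx\rangle$ and $\langle t,\vx'\rangle$ both lie in $\lc_m(k)$ with witnesses $\vy$ and $\vy'$. Then $\W(k,k,\vy)$ and $\W(k,k,\vy')$, so the first clause of \ax{AxSelf^-} forces $y_2 = \ldots = y_d = 0$ and likewise for $\vy'$; together with $y_1 = y'_1 = t$ this gives $\vy = \vy'$, whence $\ev_m(\vx) = \ev_k(\vy) = \ev_m(\vx')$ and this common event is nonempty (it contains $k$). The task therefore reduces to the claim I expect to be the main obstacle: for inertial $m$, if $\ev_m(\vx) = \ev_m(\vx')$ is nonempty then $\vx = \vx'$. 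I would argue this contrapositively. Assuming $\vx \ne \vx'$, the light cones $\{\vu : \sqspace(\vx,\vu) = \ls_m^2\cdot\timed(\vx,\vu)^2\}$ and the analogous cone at $\vx'$ are distinct quadrics whose symmetric difference is governed by a nontrivial affine equation (the quadratic terms cancel), so for $d \ge 2$ there is a $\vu$ light-separated from $\vx$ but not from $\vx'$. By \ax{AxPh} applied to $(\vx,\vu)$ there is a light signal $p$ with $\W(m,p,\vx)$ and $\W(m,p,\vu)$; if $p$ were also at $\vx'$, the converse direction of \ax{AxPh} applied to $(\vx',\vu)$ would contradict the choice of $\vu$. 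Hence $p \in \ev_m(\vx) \setminus \ev_m(\vx')$, contradicting the assumed equality of events.

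For part~(3), take $t \in \dom\lc_m(h)$ with witnesses $\vx$ and $\vy$, so that $h \in \ev_h(\vy) = \ev_m(\vx)$ and $y_1 = t$. Since the event is nonempty and both $m,k$ are inertial, \ax{AxEv} supplies a $\vx'$ with $\ev_m(\vx) = \ev_k(\vx')$; then $h \in \ev_k(\vx') = \ev_h(\vy)$ and $y_1 = t$, so $\langle t,\vx'\rangle \in \lc_k(h)$ and $t \in \dom\lc_k(h)$, with the reverse inclusion symmetric. Treating the life-curves as functions (by part~(2), which also costs \ax{AxPh} and \ax{AxSelf^-} on the inertial observers $m,k$), one reads $\vx = \lc_m(h)(t)$ and $\vx' = \lc_k(h)(t)$, and the chain $\ev_m(\vx) = \ev_h(\vy) = \ev_k(\vx')$ delivers the required identity of events.
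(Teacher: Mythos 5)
Your proof is correct. Note that the paper itself does not prove Prop.~\ref{prop-wl} at all --- it only cites \cite[Prop.~6.1.6]{SzPhd} --- so there is no in-paper argument to compare against; your write-up is a sound self-contained reconstruction. Parts (1) and (3) are indeed pure definition-unwinding plus one application of \ax{AxEv^-} resp.\ \ax{AxEv}, and you identify the genuine content of part (2) correctly: after \ax{AxSelf^-} pins the witness $\vy$ to the time axis, everything reduces to the lemma that for an inertial observer $m$ the map $\vx\mapsto\ev_m(\vx)$ is injective on nonempty events, which you prove by the standard light-cone separation argument from \ax{AxPh} (two distinct light cones differ, since their equations subtract to a nontrivial affine condition, and the cone through $\vx$ is never contained in a hyperplane; the photon witnessing $\vu$ on one cone cannot lie in the other event). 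You are also right to flag that the notation $\lc_m(h)(t)$ in part (3) tacitly presupposes functionality, hence \ax{AxPh} and \ax{AxSelf^-} via part (2) (or, alternatively, \ax{AxSelf^-} for $h$ alone already makes $\ev_h(\vy)$ depend only on $t$, which suffices for the stated identity of events); the paper's statement is slightly loose on this point and your handling of it is appropriate.
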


\section{Concluding Remarks}

We have shown that, in special relativity, the possibility of
hypercomputation is equivalent to the existence of FTL signals. A
natural continuation is to investigate the question concerning the
limits of the possibility of using FTL particles in hypercomputation
in special and general relativity theories. For example, is there a natural
assumption on spacetime which does not forbid the existence of FTL
particles, but makes it impossible to use them for hypercomputation?

Of course our construction contains several engineering
  difficulties.  For example, the larger the distance the more
  difficult to aim with a signal. Therefore, the computer has to
  calculate the speed of the FTL signal more and more accurately to
  ensure that the signal arrives to the programmer between events $O$
  and $M$, see Fig.\ref{fig-hc}. Thus the computer has to be able to
  aim with the FTL signal with arbitrary precision.

\end{document}